\documentclass[preprint,12pt]{elsarticle}




\usepackage{amssymb}
\usepackage{amsmath}
\usepackage{amsfonts}
\usepackage{amsthm}
\newtheorem{mydef}{Definition}
\newtheorem{thm}{Theorem}
\newtheorem*{them}{Theorem}

\usepackage{calc}
\usepackage{graphicx}
\usepackage{tikz}

\usepackage{pgf}
\usetikzlibrary{fit}
\usetikzlibrary{intersections}
\usetikzlibrary{decorations.markings}
\usetikzlibrary{mindmap}
\usepackage{verbatim}
\usetikzlibrary{positioning,shapes,shadows,arrows}

\usepackage{color}

\tikzstyle{vertex}=[circle, draw, inner sep=0pt, minimum size=6pt]
\newcommand{\vertex}{\node[vertex]}

\usepackage{algorithm}
\usepackage{algpseudocode}
\usepackage{caption}

\usepackage[retainorgcmds]{IEEEtrantools}

\allowdisplaybreaks

\usepackage{url}

\journal{Theoretical Computer Science}

\begin{document}

\begin{frontmatter}
\title{\textbf{Moving in temporal graphs with very sparse random availability of edges}}
\author[1,2]{Paul G. Spirakis}
\author[3]{Eleni Ch. Akrida}
\address[1]{\small Computer Technology Institute \& Press “Diophantus” (CTI), Patras, Greece}
\address[2]{\small Department of Computer Science, University of Liverpool, UK}
\address[3]{\small Department of Mathematics, University of Patras, Greece

Email: \url{spirakis@cti.gr}, \url{akridel@master.math.upatras.gr}}

\begin{abstract}
\footnotesize{
In this work we consider temporal graphs, i.e. graphs, each edge of which is assigned a set of discrete time-labels drawn from a set of integers. The labels of an edge indicate the discrete moments in time at which the edge is available. We also consider temporal paths in a temporal graph, i.e. paths whose edges are assigned a strictly increasing sequence of labels. Furthermore, we assume the uniform case (UNI-CASE), in which every edge of a graph is assigned exactly one time label from a set of integers and the time labels assigned to the edges of the graph are chosen randomly and independently, with the selection following the uniform distribution. We call uniform random temporal graphs the graphs that satisfy the UNI-CASE. We begin by deriving the expected number of temporal paths of a given length in the uniform random temporal clique. We define the term temporal distance of two vertices, which is the arrival time, i.e. the time-label of the last edge, of the temporal path that connects those vertices, which has the smallest arrival time amongst all temporal paths that connect those vertices. We then propose two statistical properties of temporal graphs. One is the maximum expected temporal distance which is, as the term indicates, the maximum of all expected temporal distances in the graph. The other one is the temporal diameter which, loosely speaking, is the expectation of the maximum temporal distance in the graph. Since uniform random temporal graphs, except for the clique, have at least a pair of vertices whose temporal distance is infinity, we assume the existence of a \textit{slow} way to go directly from any vertex to any other vertex in order for the above measures to have a finite value. We derive the maximum expected temporal distance of a uniform random temporal star graph as well as an O($\sqrt{n} \log^2{n}$) upper bound, and a greedy algorithm which computes in polynomial time the path that achieves it, on both the maximum expected temporal distance and the temporal diameter of the \textit{normalized} version of the uniform random temporal clique, in which the largest time-label available equals the number of vertices. Finally, we provide an algorithm that solves an optimization problem on a specific type of temporal (multi)graphs of two vertices.}
\end{abstract}

\begin{keyword}
\footnotesize{Temporal graphs; Probabilistic analysis of algorithms; The bridges' optimization problem}
\end{keyword}
\end{frontmatter}

\section{Introduction}

A temporal graph (or otherwise called temporal network) is, loosely speaking, a graph that changes with time. This concept incorporates a variety of both modern and traditional networks such as information and communication networks, social networks, transportation networks, and several physical systems. The presence of dynamicity in modern communication networks, i.e. in mobile ad hoc, sensor, peer-to-peer, and delay-tolerant networks, is often very strong. We can also find that kind of dynamicity in social networks, where the topology usually represents the social connections between a group of individuals. Those connections change as the social relationships between the individuals or even the individuals themselves change. Temporal graphs can also be associated with transportation networks. In a transportation network, there is usually some fixed network of routes and a set of transportation units moving over these routes. In such networks, the dynamicity refers to the change of positions of the transportation units in the network as time passes. Concerning physical systems, dynymicity may be present in systems of interacting particles.

In this work, embarking from the foundational work of Kempe et al. \cite{kempe}, we consider the time to be discrete, that is, we consider networks in which changes can only occur at discrete moments in time, e.g. days or hours. This choice not only gives to the resulting models a purely combinatorial flavor but also naturally abstracts many real systems. In particular, we consider those networks that can be described via an underlying graph $G$ and a labeling $L$ assigning a set of discrete labels to each edge of $G$. This is a generalization of the single-label-per-edge model used in \cite{kempe}, as we allow many time-labels to appear on an edge, although in this work we mainly focus on single-labeled temporal graphs. These labels are drawn from the natural numbers and indicate the discrete moments in time at which the corresponding connection is available, i.e. the corresponding edge exists in the graph. For example, in a communication network, the availability of a connection at some time $t$ may indicate that a communication protocol is allowed to transmit a data packet over that connection at time $t$. A temporal path (or journey) in a temporal graph is a path, on the edges of which we can find strictly ascending time labels. The number of edges on the latter is called length of the temporal path. This, for a communication network, would mean that it is possible to transmit a data packet along the network nodes that belong to such a path from the first node in order to the last one, as time progresses. The time label on the last edge of a temporal path is called its arrival time and, in the above example of a connection network, it would indicate the time at which the transmitted data packet would arrive at the last node of the path.

In this work, we initiate the study of temporal graphs from a probabilistic and statistical viewpoint. In particular, we consider the case in which every edge of a graph is assigned exactly one time label from a set $L_0 = \{1, 2, \ldots, a\}$ of integers. The time labels assigned to the edges of the graph are chosen \textit{randomly} and \textit{independently} from one another from the set $L_0$ and the probability that an edge is assigned a time label $i \in L_0$ is equal to $\frac{1}{a}$, for every $i \in L_0$. We use the term \emph{UNI-CASE} for the  above described case and for any graph that satisfies UNI-CASE's properties we use the term \emph{Uniform Random Temporal Graph}. We focus on examining three statistical properties of such graphs. The first one, called \emph{expected number of temporal paths of a given length}, is the number of temporal paths, of a given length, that we expect to have in a graph, given that every edge is assigned a label satisfying UNI-CASE. The second one, called \emph{the Maximum Expected Temporal Distance}, is the maximum of all temporal distances in the graph. By temporal distance of two vertices we denote the arrival time of the temporal path that connects those vertices, which has the smallest arrival time amongst all temporal paths that connect those vertices. The last property that we examine is called \emph{the Temporal Diameter} of a uniform random temporal graph. Loosely speaking, it is the expected value of the maximum temporal distance in the graph, which of course is in correspondence with the diameter of a graph, as we know it up to now.

The motivation of the definitions we initiate and the work we carry out here comes from the natural question on how fast we can visit a particular destination, i.e. arrive at a particular network node, starting from a given point of origin, i.e. another network node, when the connection between a pair of nodes only exists at one moment in time.

\subsection{Related work}

\noindent \textbf{Labeled Graphs.} Labeled graphs are becoming an increasingly useful family of Mathematical Models for a broad range of applications both in Computer Science and in Mathematics, e.g. in Graph Coloring\cite{molloy}. In our work, labels correspond to time moments of availability and the properties of labeled graphs that we study are naturally \emph{temporal properties}. However, we can note that any property of a graph that is assigned labels from a discrete set of labels can correspond to some temporal property. Take for example a proper edge-coloring in a graph, i.e. a coloring of the graph's edges in which no two adjacent edges have the same color. This corresponds to a temporal graph in which no two adjacent edges have the same time label, that is no two adjacent edges exist at the same time.

\noindent \textbf{Single-labeled and multi-labeled Temporal Graphs.} The model of temporal graphs that we consider in this work has a direct relation with the single-labeled model studied in \cite{kempe} as well as the multi-labeled model studied in \cite{spirakis}. The main results of \cite{kempe} and \cite{spirakis} have to do mainly with connectivity properties and/or cost minimization parameters for temporal network design. In this work we study temporal graphs from a statistical view and mainly focus on how fast we expect to arrive at a target vertex in a temporal graph. In \cite{kempe}, a temporal path is considered to be a path with non-decreasing labels on its edges. In this work, we follow the assumption of \cite{spirakis} and consider a temporal path to be a path with strictly increasing labels. This choice is also motivated by recent work on dynamic communication systems, in which if it takes one time unit for the transmition of a data packet over a link, then a packet can only be transmitted over paths with strictly increasing labels.

\noindent \textbf{Continuous Availabilities (Intervals).} Some authors have assumed the availability of an edge for a whole time-interval [$t_1,t_2$] or multiple such time-intervals. Although this is a clearly natural assumption, in this work we focus on the availability of edges at discrete moments and we design and develop techniques which are quite different from those needed in the continuous case.

\subsection{Roadmap and contribution}
In Section \ref{sec:pre}, we formally define the model of temporal graphs under consideration and provide all further necessary basic definitions. In Section \ref{sec:exp}, we make some general remarks on the expected number of temporal paths in any graph and proceed to the study of the expected number of temporal paths of a given length in the uniform random temporal clique of $n$ vertices, $K_n$. For this matter, we distinguish two cases. In Section \ref{sec:exp1}, we study the first case, where we set the largest label available for assignment to be $a=n-1$ and we search for the expected number of temporal paths of length $k=n-1$. In Section \ref{sec:exp2}, we study the second case, where we loosen the parameters $a$ and $k$ and we look at the expected number of temporal paths of length $k<a$, when the largest label available for assignment is $a=n-1$. In Section \ref{sec:md}, we formally define the maximum expected temporal distance of a uniform random temporal graph and we make some preliminary notations. In Section \ref{sec:md1}, we look at some known graphs' maximum expected temporal distance. In particular, in Section \ref{sec:md11}, we study the case of the uniform random temporal star graph and we provide its exact maximum expected temporal distance. In Section \ref{sec:md12}, we study the case of the uniform random temporal clique, focusing on its normalized version, where the largest label, $a$, available for assignment is equal to the number of vertices, $n$. We also give a simple \textit{(greedy)} algorithm which can, with high probability, find a temporal path with small expected arrival time from a given source to a given target vertex in the normalized uniform random temporal clique. In Section \ref{sec:td}, we formally define the temporal diameter of a uniform random temporal graph and provide an inequality relation between the latter and the maximum expected temporal distance as well as the relevant proof. Furthermore, we provide an upper bound for both the temporal diameter and the maximum expected temporal distance of the nomalized uniform random temporal clique. In Section \ref{sec:bridge}, we study an optimization problem on a specific type of temporal (multi)graphs of two vertices. We prove that the problem can by polynomially solved and provide an algorithm that gives the solution, along with the proof of its correctness. Finally, in Section \ref{sec:concl} we conclude and give further research horizons opened through our work.

\section{Preliminaries}\label{sec:pre}

\begin{mydef}
A temporal graph is an ordered triplet $G=\{V,E,L\}$, where:
\begin{itemize}
\item $V$ stands for a nonempty finite set (called set of vertices)
\item $E$ stands for a set of m elements, each of which is a 2-element subset of V (called set of edges), and
\item $L= \{L_e, \forall e \in E\} = \{L_{e_1}, L_{e_2}, \ldots, L_{e_m}\}$, is a set of m elements, $L_{e_i},~1\leq i \leq m$, each of which is a set of positive integers  mapped to the edge $e_i \in E$ (called assignment of time labels or simply assignment)
\end{itemize}
\end{mydef}

We also denote the temporal graph $G=\{V,E,L\}$ by $G'(L)$ or $(G',L)$, where $G' = \{V,E\}$ is the graph, on the edges of which we assign the time labels, and $L= \{L_e, ~ e \in E(G')\}$ is the assignment.

The values assigned to each edge of the graph are called time labels of the edge and indicate the times at which we can cross it (from one end to the other).

\subsection{Further Definitions}\label{sec:def}
We can now talk about temporal edges (or time edges) that are considered to be triplets $(u, v, l)$, where $u, v$ are the ends of an edge in the temporal graph and $l \in L_{ \{u, v\} }$ is a
time label of this edge. That is, if an edge $e = \{u, v\}$ has more
than one time labels, e.g. has a set of three time labels, $L_e = \{l_1, l_2, l_3\}$, then this edge has three corresponding time edges, $(u, v, l_1),~ (u, v, l_2)$ and $(u, v, l_3)$.

\begin{mydef}
A journey $j$ from a vertex $u$ to a vertex $v$ ($(u, v)$-
journey) is a sequence of time edges $(u, u_1, l_1), ~(u_1, u_2, l_2), \ldots , ~(u_{k-1}, v, l_k)$, such that $l_i < l_{i +1}$, for each $1 \leq i \leq k - 1$.\\
We call the last time label of journey $j$, $l_k$, {\emph arrival time} of the journey.
\end{mydef}

\begin{mydef}
A ($u,v$)-journey $j$ in a temporal graph is called {\emph foremost journey} if its arrival time is the minimum arrival time of all ($u,v$)-journeys' arrival times, under the labels assigned on the graph's edges.
\end{mydef}

Now, consider any temporal graph $G=\{V,E,L\}$.
Let every edge receive exactly one time label, chosen randomly, independently of one another from a set $ L_0 $ = \{$ 1,2, \ldots, a $\}, where $ a \in \mathbb{N} $, with the probability of an edge label to be $ i, ~ \forall i \in L_0 $, equal to $ \frac{1}{a} $. (\textbf {UNI-CASE)}

\begin{mydef}
A temporal graph that satisfies UNI-CASE is called \textit{Uniform Random Temporal Graph (U-RTG)}.
\end{mydef}

In the special case, where the largest label, $a$, that can be assigned to the edges of a graph is equal to the number of its vertices, the graph is called \textit{Normalized Uniform Random Temporal Graph (Normalized U-RTG)}.

\textbf {Note.} There could be prospective study of cases in which each edge of a graph may receive several time labels, selected randomly and independently of one another from the set $ L_0 $ = \{$ 1 , 2, \ldots, a $\}, where $ a \in \mathbb {N} $, with the selection following a distribution F. (\textbf {F-CASE)}\\ In such cases, the graphs under consideration would be called \textit{F-Random Temporal Graphs (F-RTG)} respectively. \label{sec:eisag}

In the following sections, we will look for the expected number of journeys of length k in some well-known graphs that satisfy UNI-CASE. For the sake of brevity, we often call such journeys \textit {``k edges temporal paths''}. \ \
We also study the Expected (or Temporal) Diameter and the Maximum Temporal Distance of a graph, as defined in the following paragraphs.

\section{Expected number of temporal paths}\label{sec:exp}

In this section we will search for the expected number of $k$ edges temporal paths in a clique of n vertices, $ K_n $, that satisfies UNI-CASE.

It is obvious that for there to exist a temporal path of  length k in \textbf{any} graph, the number of edges, k, has to be at most equal to the maximum label of the set $ L_0 $, $ a $,  that can be assigned to the various edges. Otherwise, it is impossible for a $k$ edges temporal path to exist (see Figure \ref{fig:temp-stat1}).

\begin{figure}[htbp]
\begin{center}

\[\begin{tikzpicture}[thick,scale=0.95]
	\coordinate [label=right:{$L_0=\{1,2,3 (=a)\}$}] (P) at (0,-1);
	\coordinate [label=right:{$k=4$}] (P') at (0,-2);
	\vertex (1) at (0,0) [label=below:$$] {};
	\vertex (2) at (2,0.5) [label=left:$$] {};
	\vertex (3) at (4,0.2) [label=left:$$] {};
	\vertex (4) at (6,0) [label=above:$$] {};
	\vertex (5) at (8,0.4) [label=above:$$] {};
	\path
		(1) edge node[above]{$1$} (2)
		(2) edge node[above]{$2$} (3)
		(3) edge node[above]{$3$} (4)
		(4) edge node[above]{\textbf{\textcolor{red}{;}}} (5)
	;	
\end{tikzpicture}\]
\end{center}

\rule{35em}{0.5pt}
\caption{There is no temporal path, when $k>a$}
\label{fig:temp-stat1}
\end{figure}
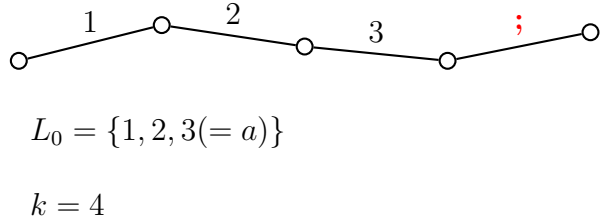

\subsection{Special case: $G=K_n,~k=n-1,~a=n-1$}\label{sec:exp1}

Initially, we focus our interest in the case of the clique (complete graph) of n vertices, $ K_n $, that satisfies UNI-CASE with $a = n-1$ (i.e. with $ L_0 = \{1,2, \ldots, n-1 \} $), in which we seek the expected number of $n-1$ edges temporal paths.

Obviously, there can only be one assignment of labels of $ L_0 $ on the $ k = n-1 $ edges of any path starting from a random initial vertice $ v_0 \in V (K_n) $ in the clique $ K_n $ such, that we can find a journey on the edges of this path. This assignment gives label 1 on the $1^{st}$ edge, label 2 on the $2^{nd}$ edge, $\ldots$ , label $n-1$ on the $(n-1)^{th}$ edge.

Each edge can receive exactly one label from a set of $n-1$ labels. Therefore, the total number of assignments that can be made on these $ n-1 $ edges is:
\[ \# assignments = (n-1)^{n-1} \]

Consequently, given a path of $ n-1 $ edges starting from $ v_0 $, the probability for there to exist the corresponding temporal path (i.e. the one arising on the simple path after the assignment of the time labels) is:
\[ P(temporal\_ path\_ of\_length \_n-1 \_starting\_ from \_v_0)= \frac{1}{(n-1)^{n-1}} \]

The number of paths of length $ n-1 $, starting from $ v_0 $ in the clique $ K_n $  is equal to the number of permutations of the $ n-1 $ vertices remaining (i.e. except the start $ v_0 $) to construct such a path. That is, the number of paths of length $ n-1 $ that start from $ v_0 $ in the clique $ K_n $  is:
\[ (n-1)! \]

Therefore, since the clique $ K_n $ has $ n $ vertices, and due to the linearity of expectation, the expected number of temporal paths of length $ k = n-1 $ in the clique $K_n$ is:
\[ E(\# temporal \_paths\_of\_length\_n-1) = n \cdot (n-1)! \cdot \frac{1}{(n-1)^{n-1}} = \frac{n!}{(n-1)^{n-1}}\]

\paragraph{Comments} Let us observe that when $ n $ is too large ($n\rightarrow + \infty$), then, by Stirling's formula, we result in the following:
\begin{IEEEeqnarray*}{lCl}
E(\# temporal \_paths\_of\_length\_n-1) & = & \frac{\sqrt{2\pi n} \Big(\frac{n}{e}\Big)^n}{(n-1)^{n-1}}	
\\
 & = &	\frac{\sqrt{2\pi n} n^n}{e^n (n-1)^{n-1}} \xrightarrow[n\to+\infty]{} 0
\end{IEEEeqnarray*}
Of course, this is more or less obvious when we consider the fact that it is difficult to find $n-1$ edges temporal paths in the clique of $n$ vertices when $n$ is too large. This is because in order to have a temporal path of such length, the (so many) time labels should be assigned on the edges so that they maintain the desired strictly increasing sequence, something that is increasingly less likely to happen as $ n $ increases.

\subsection{Special case: $G=K_n,~k<a,~a\geq n$}\label{sec:exp2}
Now let's see what happens in the case of the clique $K_n$, that satisfies UNI-CASE, when we look at the expected number of temporal paths of length $k<a$ and the maximum label that can be assigned to any edge of the clique is $a \geq n$.

Starting from a vertex $v_0 \in V(K_n)$ and along the path of k edges, we can construct, as explained in Figure \ref{fig:temp-stat2}, a number of assignments equal to:
\[ \# assignments = a^k \]

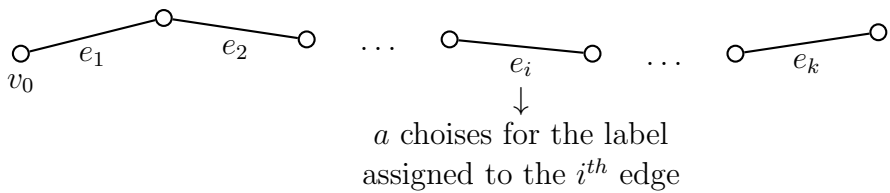
\begin{figure}[htbp]
\begin{center}

\[\begin{tikzpicture}[thick,scale=0.95]
	\coordinate [label=below:{$\ldots$}] (P) at (5,0.25);
	\coordinate [label=below:{$\ldots$}] (P') at (9,0.05);
	\coordinate [label=below:{$\downarrow$}] (P'') at (7,-0.3);
	\coordinate [label=below:{$a$ choises for the label}] (P'') at (7,-0.8);
	\coordinate [label=below:{assigned to the $i^{th}$ edge}] (P'') at (7,-1.3);
	\vertex (1) at (0,0) [label=below:$v_0$] {};
	\vertex (2) at (2,0.5) [label=left:$$] {};
	\vertex (3) at (4,0.2) [label=left:$$] {};
	\vertex (4) at (6,0.2) [label=above:$$] {};
	\vertex (5) at (8,0) [label=above:$$] {};
	\vertex (6) at (10,0) [label=above:$$] {};
	\vertex (7) at (12,0.3) [label=above:$$] {};
	\path
		(1) edge node[below]{$e_1$} (2)
		(2) edge node[below]{$e_2$} (3)
		(4) edge node[below]{$e_i$} (5)
		(6) edge node[below]{$e_k$} (7)
	;	
\end{tikzpicture}\]
\end{center}

\rule{35em}{0.5pt}
\caption{Number of assignments on a path of length $k$, when $k<a$}
\label{fig:temp-stat2}
\end{figure}

The number of assignments that can be made on the $ k $ edges, where the time labels assigned are distinct (different from each other) is:
\[ \# distinct\_ time\_ labels\_assignments = a \cdot (a-1) \cdot \ldots \cdot (a-k+1) = \frac{a!}{(a-k)!} \]

We will now calculate the number of paths of length $k$ that can be starting from $v_0 \in V(K_n)$. We have $n-1$ options for how to select $v_1$, the vertex following $v_0$ on the path, $n-2$ options for how to select $v_2$, the vertex following $v_1$ on the path, etc., and finally $n-k$ options for how to select $v_k$, the last vertex on the path.

Therefore, the number of paths of length $ k $ that can be starting from $ v_0 \in V (K_n) $ is:
\[ \#paths \_of\_length\_k\_starting\_from\_v_0 = (n-1)\cdot (n-2) \cdot \ldots \cdot (n-k) =\frac{(n-1)!}{(n-k-1)!} \]

We call $A$ the event that ``we have the \textit{right} labels'' assignment on the $k$ edges of any path of length $k$ starting from $ v_0 $''. \\
That is, if $l_1, l_2, \ldots, l_k$ are the time labels assigned to the $1^{st}$, the $ 2^{nd} $, $ \ldots $, the $ k^{th} $ edge of the path, respectively, with $ l_i \in L_0 = \{1,2, \ldots, a \}, ~ \forall i = 1,2, \ldots, k $, $A$ is the event that:
\[l_1 < l_2 < \ldots < l_k\]

We call $\phi$ the probability that $A$ occurs. That is:
\[ \phi = P(A)= P(l_1 < l_2 < \ldots < l_k) \]

Let us note that the number of assignments of $k$ labels, $l_{a_i},~ i=1, \ldots, k$, such that \[ l_{a_1} < l_{a_2} < \ldots < l_{a_k} \]
is $k!$ and each one has a probability equal to $P(A)$ to happen.\\
Therefore, if we consider $B$ to be the event that ``\textit{at least} two of the labels assigned on the $k$ edges of the path are equal'', then the following applies:
\[ k! \cdot P(A) + P(B) =1 \Leftrightarrow \]
\begin{equation}\label{eq:1}
k! \cdot \phi + 1 - P(\rceil{B}) =1
\end{equation}

The probability that the event $\rceil{B}$ occurs, that is there are no two equal labels assigned on the $k$ edges of the path, is:
\[ P(\rceil{B}) = \frac{\#distinct\_ time\_labels\_ assignments}{\# assignments} =\]
 \[ = \frac{\frac{a!}{(a-k)!}}{a^k} \]
 \[ = \frac{a!}{a^k \cdot (a-k)!} \]

Consequently, the relation \eqref{eq:1} becomes:
\[ k! \cdot \phi + 1 - \frac{a!}{a^k \cdot (a-k)!} =1 \Leftrightarrow \]
\[ \Leftrightarrow \phi = \frac{a!}{k! \cdot a^k \cdot (a-k)!}\]

Let us recall that $\phi$ is the probability to have a \textit{proper} assignment on the $ k $ edges of any path of length $k$ starting from any vertice $ v_0 $ of the clique $ K_n $.\\
Also, recall that the number of paths of length $k$ that can be starting from any vertice $v_0$ of the clique $K_n$ is $\frac{(n-1)!}{(n-k-1)!}$.\\
Therefore, the expected number of paths of length $k$ that start from a random vertex $v_0$ and on which there are labels assigned so that there exists a temporal path on them, is:
\[ E(\#temporal\_paths\_of\_length\_k\_starting\_from\_v_0) = \frac{(n-1)!}{(n-k-1)!} \cdot \phi \]

Eventually, since the clique $K_n$ has a number of $n$ vertices, the expected number of paths of length $k$, on which labels are assigned in a way that there exists a temporal path on them, is:
\[ E(\#temporal\_paths\_of\_length\_k) = n \cdot \frac{(n-1)!}{(n-k-1)!} \cdot \phi \]
\[ = \frac{n \cdot (n-1)!}{(n-k-1)!} \cdot \frac{a!}{k! \cdot a^k \cdot (a-k)!} \]
\[ = \frac{n! \cdot a!}{(n-k-1)! \cdot k! \cdot a^k \cdot (a-k)!} \]

\paragraph{Comments} Let us observe that the probability $\phi$ is:
\[ \phi = \frac{1}{k!} \cdot \frac{\overbrace{a (a-1) \ldots (a-k+1)}^{\text{k factors}}}{\underbrace{a \cdot \ldots \cdot a}_\text{k factors}}\]
and so, if $a$ is very large in comparison with $k$, then we have $\phi  \approx \frac{1}{k!}$.\\
Hence, if $a$ is far larger than $k$, then the expected number of temporal paths of length $k$ in the clique $K_n$, is:
\[ E(\#temporal\_paths\_of\_length\_k) \approx \frac{n!}{k! (n-k-1)!} = \frac{n \cdot (n-1) \cdot \ldots \cdot (n-k)}{k!} \]

\section{The Maximum Expected Temporal Distance}\label{sec:md}

In this section, we will define and study a new concept, that of \textit {the maximum expected temporal distance} of a U-RTG.\\
Henceforth, we make the following assumption. For every pair of vertices in any U-RTG, there exists a \textit{\textbf{slow}} journey that connects them, whose arrival time is a fixed, for each graph, number $n' \in \mathbb{N},$ where $n'$ is greater than the expected value of any edge's label, $l$. That is $n' \geq E(l)$.

\begin{mydef}
Consider an instance $G(L)$ of a U-RTG. Given two vertices $s,t \in V \big( G(L) \big)$, we define:
\begin{itemize}
\item $\delta ' (s,t)=a(j),$\label{s10} where $j$ is a foremost $(s,t)-$journey, to be called \textbf{\textit{distributional temporal distance}} from source vertex $s$ to target vertex $t$ under the assignment $L$. If there exists no $(s,t)-$journey in G, then $\delta ' (s,t) \rightarrow \infty $

\item $ \delta(s,t) = min \{ \delta ' (s,t) , n' \} $ to be called \textbf{\textit{temporal distance}} from source vertex $s$ to target vertex $t$ under the assignment $L$, and

\item $MD= max_{s,t \in V(G)} E\big(  \delta(s,t) \big)$\label{s11} to be called \textbf{\textit{Maximum Expected Temporal Distance}} of $G$
\end{itemize}
\end{mydef}

\noindent \textit{Remark.} If the $U-RTG$ is a path itself, then its maximum expected temporal distance is obviously $n'$. \\

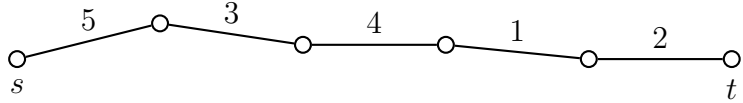
\begin{figure}[htbp]
\begin{center}

\[\begin{tikzpicture}[thick,scale=0.95]
	\vertex (1) at (0,0) [label=below:$s$] {};
	\vertex (2) at (2,0.5) [label=left:$$] {};
	\vertex (3) at (4,0.2) [label=left:$$] {};
	\vertex (4) at (6,0.2) [label=above:$$] {};
	\vertex (5) at (8,0) [label=above:$$] {};
	\vertex (6) at (10,0) [label=below:$t$] {};
	\path
		(1) edge node[above]{$5$} (2)
		(2) edge node[above]{$3$} (3)
		(3) edge node[above]{$4$} (4)
		(4) edge node[above]{$1$} (5)
		(5) edge node[above]{$2$} (6)
	;	
\end{tikzpicture}\]
\end{center}

\rule{35em}{0.5pt}
\caption{MD of a $U-RTG$, which is a path itself, equals $n'$.}
\label{fig:td1}
\end{figure}

This can be easily understood if we consider that for any two vertices $u$ and $v$ in the path, if there exists a ($u,v$)-journey, then the time labels assigned to its edges form a strictly increasing sequence and thus there is no ($v,u$)-journey in it, apart from the \textit{slow} journey which we assume that exists. Therefore, $\delta ' (v,u) \rightarrow + \infty$ and $ \delta (v,u) = min \{\delta ' (v,u), n'\} = n'$. (see Figure \ref{fig:path11}).

\begin{figure}[htbp]
\begin{center}

\[\begin{tikzpicture}[thick,scale=0.95]
	\vertex (1) at (0,0) [label=below:$$] {};
	\vertex (2) at (2,0.5) [label=left:$$] {};
	\vertex (3) at (4,0.2) [label=above:$u$] {};
	\vertex (4) at (6,0.55) [label=above:$$] {};
	\vertex (5) at (8,0) [label=above:$v$] {};
	\vertex (6) at (10,0) [label=above:$$] {};
	\vertex (7) at (12,0.5) [label=above:$$] {};

	\node[anchor=east] at (2.65,-2.5) (a) {};
	\node[anchor=west] at (9,-2.5) (b) {};
	\node[anchor=east] at (3.3,-1.4) (a') {};
	\node[anchor=west] at (8.2,-1.4) (b') {};

	\vertex (8) at (2.65,-1.9) [label=above:$u$] {};
	\vertex (9) at (6,-1.25) [label=above:$$] {};
	\vertex (10) at (9,-2.1) [label=above:$v$] {};
	\path
		(8) edge node[sloped, above]{\textcolor{blue!70}{$3$}} (9)
		(9) edge node[sloped, above]{\textcolor{blue!70}{$4$}} (10)
		(1) edge node[above]{$$} (2)
		(2) edge  [line width=1pt,black!0.1]  node[sloped, below, black]{$\ldots$}  (3)
		(3) edge node[sloped, above]{\textcolor{blue!70}{$3$}} (4)
		(4) edge node[sloped, above]{\textcolor{blue!70}{$4$}} (5)
		(5) edge  [line width=1pt,black!0.1]  node[sloped, below, black]{$\ldots$} (6)
		(6) edge node[above]{$$} (7)

		(3) edge [line width=1pt,dotted, blue!60] node[above]{$$} (8)
		(5) edge [line width=1pt, dotted, blue!60] node[above]{$$} (10)
		(a) edge[->, bend left=20, blue!70] node [below]{\textcolor{blue!70}{$\delta ' (u,v) = 4$}} (b)
		(b') edge[->, bend right=25, red] node [above]{\textcolor{red}{$\delta ' (v,u) \rightarrow \infty $}} (a');
	;
\end{tikzpicture}\]
\end{center}
\rule{35em}{0.5pt}
\caption{Example of temporal distance, from source vertex to target vertex, equal to $ n'$.}
\label{fig:path11}
\end{figure}
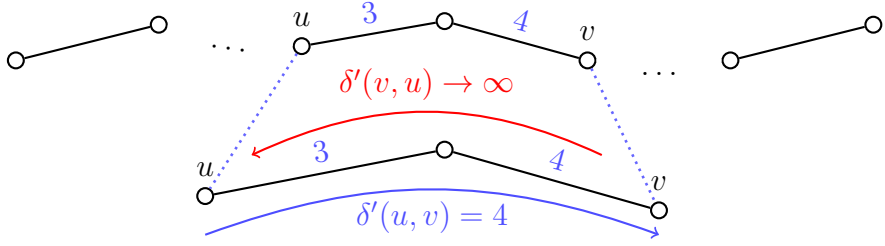

\subsection{Known graphs' maximum expected temporal distance}\label{sec:md1}
Next, we study the maximum expected temporal distance of two known graphs, the star graph of $n$ vertices, which we denote by $G_{star}$ (see Figure \ref{fig:td2}) and the clique of $n$ vertices, $K_n$ (see Figure \ref{fig:td3}).\\

\subsubsection{Case: $G=G_{star}$}\label{sec:md11}

It is easy to understand that, even if the temporal star graph does not satisfy UNI-CASE, but satisfies any F-CASE, as defined in Section \ref{sec:eisag}, it is:
\[max_{s,t \in V(G_{star})} E_F \big( \delta (s,t) \big) \geq 2,~ \text{for any distribution }F \]

We will calculate the exact maximum expected temporal distance, $MD$, of a uniform random temporal star graph. It is:
\begin{IEEEeqnarray}{rCl}\label{eq:2}
MD(G_{star}) & = & max_{s,t\in V(G_{star})}E \big(\delta (s,t) \big) \nonumber\\ \quad
& = & E\big(\delta (s,t) \big) \text{, for any two vertices } s,t \in V(G_{star})
\nonumber\\ \quad
& = &  E(l_2|~l_2 > l_1) \cdot P(l_2 > l_1) +n' \cdot P(l_2 \leq l_1)
\end{IEEEeqnarray}

\begin{figure}[htbp]
\begin{center}

\[\begin{tikzpicture}[thick,scale=0.95]
	\vertex (1) at (0,0) [label=below:$$] {};
	\vertex (2) at (2,0) [label=left:$$] {};
	\vertex (3) at (1,1.5) [label=above:$t$] {};
	\vertex (4) at (1,-1.5) [label=above:$$] {};
	\vertex (5) at (-1,-1.5) [label=above:$$] {};
	\vertex (6) at (-2,0) [label=left:$s$] {};
	\path
		(1) edge node[above]{$$} (2)
		(3) edge [line width=1pt,black!0.1]  node[sloped, above, black]{$\ldots$} (6)
		(1) edge node[right]{$l_2$} (3)
		(1) edge node[above]{$$} (4)
		(1) edge node[above]{$$} (5)
		(1) edge node[below]{$l_1$} (6)
	;
	
\end{tikzpicture}\]
\end{center}

\rule{35em}{0.5pt}
\caption{A star graph}
\label{fig:td2}
\end{figure}
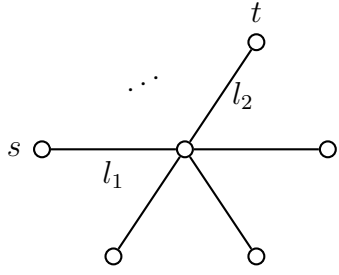

We calculate the expected value of label $l_2$, given that $l_2 > l_1$, that is $E(l_2|~l_2 > l_1)$:
\begin{IEEEeqnarray*}{rCl}
E(l_2|~l_2 > l_1)& = & \sum_{i=1}^a E(l_2|~l_2>i) \cdot P(l_1=i)
\\
& = & \sum_{i=1}^a \Big( \sum_{i'=i}^{a} \big( P(l_2 = i' +1) \cdot (i'+1) \big) \Big) \cdot P(l_1 = i)
\\
& = & \sum_{i=1}^a \Big( \sum_{i'=i}^{a} ( i' +1) \cdot  \frac{1}{a} \Big) \cdot \frac{1}{a}
\\
& = & \frac{1}{a^2} \cdot \sum_{i=1}^{a} \sum_{i'=i}^a (i'+1)
\\
& = & \frac{1}{a^2} \cdot \Big( \sum_{i'=1}^a (i'+1) + \sum_{i'=2}^a (i'+1) + \ldots  + \sum_{i'=a}^a (i'+1)   \Big) 
\\
& = & \frac{1}{a^2} \cdot \Big(  \big( 2+3+ \ldots + (a+1) \big) + \big( 3+4 + \ldots + (a+1) \big) + \ldots + \big( (a+1) \big) \Big)
\\
& = & \frac{1}{a^2} \cdot \Big( 1\cdot 2 + 2 \cdot 3 + 3 \cdot 4 + 4 \cdot 5 + \ldots + a \cdot (a+1) \Big)
\\
& = & \frac{1}{a^2} \cdot \sum_{i=1}^a \Big( i \cdot (i+1) \Big)
\\
& = & \frac{1}{a^2} \cdot \sum_{i=1}^a \Big( i^2 +i \Big)
\\
& = & \frac{1}{a^2} \cdot \sum_{i=1}^a  i^2 + \sum_{i=1}^a i 
 \\
& = & \frac{1}{a^2} \cdot \Big(		\frac{a\cdot (a+1) \cdot (2a +1)}{6}	+		 \frac{a \cdot (a+1)}{2}		\Big)
\\
& = & \frac{1}{a^2} \cdot 	\frac{a\cdot (a+1) \cdot (2a +1)	+	3 \cdot a\cdot (a+1)	}{6}
\\
& = & \frac{a\cdot (a+1) \cdot (2a +4)}{6 \cdot a^2}
\end{IEEEeqnarray*}

Therefore, relation \eqref{eq:2} becomes:
\begin{IEEEeqnarray}{rCl}\label{eq:3}
MD(G_{star}) & = &  \frac{(a+1)(a+2)}{3a}  \cdot P(l_2 > l_1) +n' \cdot P(l_2 \leq l_1)
\end{IEEEeqnarray}

It holds that:
\begin{IEEEeqnarray*}{rCl}
P(l_2 \leq l_1) & = &  \sum_{i=1}^{a} P(l_2 \leq i) \cdot P(l_1 = i)
\\
 & = & \sum_{i=1}^{a} \frac{i}{a} \cdot \frac{1}{a}
\\
 & = & \frac{1}{a^2} \sum_{i=1}^{a} i
\\
 &= & \frac{a+1}{2a}
\end{IEEEeqnarray*}

Therefore, it is:
\begin{IEEEeqnarray*}{rCl}
P(l_2 > l_1) & = &  1- P(l_2 \leq l_1)
\\
 & = &  \frac{a-1}{2a}
\end{IEEEeqnarray*}

Relation \eqref{eq:3} now becomes:
\begin{IEEEeqnarray*}{rCl}
MD(G_{star}) & = &  \frac{(a+1)(a+2)}{3a}  \cdot \frac{a-1}{2a} +n' \cdot \frac{a+1}{2a}
\end{IEEEeqnarray*}

Eventually, the star graph's maximum temporal distance is:
\[  MD(G_{star}) =  \frac{(a-1)(a+1)(a+2)}{6a^2} +n' \cdot \frac{a+1}{2a}\]

\subsubsection{Case: $G=K_n$}\label{sec:md12}

We will now study extensively the clique's case. First, let us observe that $\delta ' (s,t) \leq a$, and therefore $\delta  (s,t) \leq a$, for any two vertices $s,t$ in a clique. Hence:
\[ MD(K_n) = max_{s,t\in V(K_n)}E \big(\delta (s,t) \big)  \leq a \]

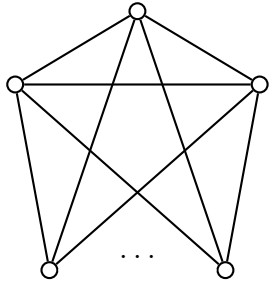
\begin{figure}[htbp]
\begin{center}

\[\begin{tikzpicture}[thick,scale=0.65]
	\vertex (1) at (-0.5,0) [label=below:$$] {};
	\vertex (2) at (2,1.5) [label=left:$$] {};
	\vertex (3) at (4.5,0) [label=left:$$] {};
	\vertex (4) at (3.8,-3.8) [label=above:$$] {};
	\vertex (5) at (0.2,-3.8) [label=above:$$] {};
	\path
		(1) edge node[above]{$$} (2)
		(1) edge node[above]{$$} (3)
		(1) edge node[above]{$$} (4)
		(1) edge node[above]{$$} (5)
		(2) edge node[above]{$$} (3)
		(2) edge node[above]{$$} (4)
		(2) edge node[above]{$$} (5)
		(3) edge node[above]{$$} (4)
		(3) edge node[above]{$$} (5)
		(4) edge [line width=1pt,black!0.1]  node[sloped, above, black]{$\ldots$} (5)

	;
\end{tikzpicture}\]
\end{center}

\rule{35em}{0.5pt}
\caption{A clique}
\label{fig:td3}
\end{figure}

\paragraph{Normalized uniform random temporal clique}
Let $G=K_n$ be a clique of $n$ vertices and let us consider its normalized U-version. That is, every edge $e \in E(K_n)$ is given a single availability label and those labels are chosen randomly and independently from one another from the set $L_0$=\{$1,2, \ldots, n$\}, with the probability that an edge's label equals $i$ being equal to $\frac{1}{n}$, $\forall i \in L_0$.

For any two vertices $s,t$ in the clique, we have:\[ E\big( l (e= \{s,t\}) \big) = \frac{n}{2} \]

In the specific case of the normalized uniform random temporal clique of $n$ vertices, there is actually no need for us to assume any \textit{slow} journey to connect any pair of vertices since we already have such a journey, with arrival time equal to $E\big( l (e= \{s,t\}) \big) = \frac{n}{2}$. But, for the sake of consistency, we can set the fixed number $n'$ to be equal to $\frac{n}{2}$.

It holds that:\[ MD(normalized ~K_n) = max_{s,t\in V(K_n)}E \big(\delta (s,t) \big) \leq \frac{n}{2} \]

Since this is only an upper bound, we wonder if we can find temporal paths with smaller arrival time than that bound. Indeed, we give a \textit{simple (greedy)} algorithm which can, with high probability, find a journey with small expected arrival time from a given source vertex $s$ to a given target vertex $t$ in the normalized uniform random temporal clique.

\noindent \textit{Note.} From here on, the notation ``$\log$'' will denote the natural logarithm.

\newpage
\begin{algorithm}[h]
\caption{The normalized U-RTG clique short journey finding algorithm, Extend-Try}
\label{alg:extend}
\begin{algorithmic}[1]
\Procedure {Extend-Try}{$clique~ K_n$, $s$, $t$, $c_1$, $k$}
	\For {i = 0 ... $c_1 \sqrt{n} \log{n}$}
		\State $s_i$ := undefined;
	\EndFor
	\State $s_0$ := $s$;
	\For {i = 0 ... $c_1 \sqrt{n} \log{n}$} \label{lst:line:for1}
		\If {$l(\{s_i,t\}) \in \big(c_1 \sqrt{n} (\log{n}) k, c_1 \sqrt{n} (\log{n}) k + \sqrt{n}\big)$}
			\State Follow directly the edge $\{s_i,t\}$; \textcolor{blue!70}{Success!}
			\State go to line \ref{lst:line:end1}
		\Else
			\If {$\exists u \in U\setminus \{t\}$ (where U stands for the set of the unvisited $~~~~~~~~~~~~~$ vertices) such, that $l(\{s_i,u\}) \in \big(k\cdot i , k (i+1) \big)$}
				\State $s_{i+1} = u$;
				\State go to line \ref{lst:line:for1}
			\Else
				\State follow directly the edge $\{s_i, u\}$ with the smallest $l(\{s_i, u\})$ $~~~~~~~~~~~~~~~~~~$ among all $u\in U$; \textcolor{blue!70}{Failure!}
				\State go to line \ref{lst:line:end2}
			\EndIf
		\EndIf	
	\EndFor
\For {i = 0 ... $c_1 \sqrt{n} \log{n}$}\label{lst:line:end1}
	\State \textbf{return} $s_i$;
\EndFor
\EndProcedure \label{lst:line:end2}
\end{algorithmic}
\end{algorithm}

\paragraph{Analysis of Extend-Try}

Next, we analyze algorithm \ref{alg:extend}, looking for the probability that it succeeds.

The probability that the time label of the edge $\{s_i,t\}$ belongs to the interval $(c_1 \sqrt{n} k, c_1 \sqrt{n} k + \sqrt{n})$ and thus the algorithm succeeds in the  $(i+1)^{\text{th}}$ iteration, is:
\[ P\Big(  l(\{s_i,t\}) \in \big(c_1 \sqrt{n} (\log{n}) k, c_1 \sqrt{n} (\log{n}) k + \sqrt{n}\big)  \Big) = \frac{\sqrt{n}}{n} = \frac{1}{\sqrt{n}} \]

\noindent Let $\varepsilon_1^j$ be the following event:
\[\text{``The algorithm finds a proper journey $s_0s_1,s_1s_2,s_2,s_3, \ldots, s_{j-1}s_j$''}\]
meaning that it finds a temporal path, on the temporal edges of which we find strictly ascending time labels and in fact the $i^{th}$ temporal edge's time label correctly belongs to the interval $((i-1) k, i k)$. The time labels are given to the edges independently from one another, thus the probability that the event $\varepsilon_1^j$ occurs is the product of the following probabilities:
$P\Big(\exists s_1\text{ unvisited vertex } :~\text{the edge } \{s_0,s_1\} \text{ has time label }l(\{s_0,s_1\}) \in (0,k)  \Big)$\\
$P\Big(\exists s_2\text{ unvisited vertex } :~\text{the edge } \{s_1,s_2\} \text{ has time label }l(\{s_1,s_2\}) \in (k,2k) $\\
$\vdots$\\
$P\Big(\exists s_j \text{ unvisited vertex }:~\text{the edge } \{s_{j-1},s_j\} \text{ has time label }l(\{s_{j-1},s_j\}) \in \big( (j-1)k,jk \big)$\\[1cm]
For any $i^{th}$ probability of the above, it holds that:

\begin{IEEEeqnarray*}{Cl}
 & 	P\Big(\exists s_i \text{ unvisited vertex}  :~\text{the edge }  \{s_{i-1},s_i\} \text{ has } l( \{s_{i-1},s_i\} ) \in ((i-1) k , i k)  \Big)
\\
= & 	1 - P\Big( \not\exists s_i \text{ unvisited vertex}:~\text{the edge } \{s_{i-1},s_i\} \text{ has }l(\{s_{i-1},s_i\}) \in ((i-1) k , i k)  \Big)	
\\
= & 	1 - P\Big(\forall s_i \text{ unvisited vertices}:~\text{the edge } \{s_{i-1},s_i\} \text{ has }l(\{s_{i-1},s_i\}) \notin ((i-1) k , i k)  \Big)	
\\
= & 	1 - \Big(  P\big(    \text{the edge } \{s_{i-1},s_i\} \text{ has }l(\{s_{i-1},s_i\}) \notin ((i-1) k , i k) 		,s_i \text{ unvisited vertex}			 \big)  \Big)^{n-i}	
\\
= & 	1 - \Big(1-  P\big(    \text{the edge } \{s_{i-1},s_i\} \text{ has }l(\{s_{i-1},s_i\}) \in ((i-1) k , i k) 		,s_i \text{ unvisited vertex}			 \big)  \Big)^{n-i}	
\\
= & 1- \Big( 1 - \frac{k}{n} \Big) ^{n-i}
\end{IEEEeqnarray*}

\noindent Therefore, the probability that $\varepsilon_1^j$ occurs, is:
\begin{IEEEeqnarray*}{lCl}
P(\varepsilon_1^j)  & = & \Bigg(	1- \Big( 1- \frac{k}{n} \Big)^{n-1}	\Bigg) \cdot
\\
&& \:  \Bigg(	1- \Big( 1- \frac{k}{n} \Big)^{n-2}	\Bigg)	\cdot \ldots \cdot 		\Bigg(	1- \Big( 1- \frac{k}{n} \Big)^{n-j}	\Bigg) \geq
\\
& \geq & \Bigg(	1- \Big( 1- \frac{k}{n} \Big)^{n-j}	\Bigg)^j \geq
\\
& \geq & \Bigg(		1-e^{-k}\Big(1-\frac{k}{n}\Big)^{-j} 		\Bigg)^j 	
\end{IEEEeqnarray*}

For $j \leq c_1 \sqrt{n} \log{n}$, we have:
\begin{IEEEeqnarray*}{rrCll}
 & \Big(1- \frac{k}{n}\Big)^{-j} & \leq & \Big(1-\frac{k}{n}\Big)^{-c_1 \sqrt{n} \log{n}} & \Leftrightarrow
\\
\Leftrightarrow & 1-e^{-k}\Big(1-\frac{k}{n}\Big)^{-j} & \geq & 1-e^{-k} \Big(1-\frac{k}{n}\Big)^{-c_1\sqrt{n}\log{n}} & 
\end{IEEEeqnarray*}

and:
\begin{IEEEeqnarray*}{rCl}
\Bigg(		1-e^{-k}\Big(1-\frac{k}{n}\Big)^{-j} 		\Bigg)^j 		& \geq &		 \Bigg(		1-e^{-k} \Big(1-\frac{k}{n}\Big)^{-c_1\sqrt{n}\log{n}}	\Bigg)^{c_1\sqrt{n}\log{n}}
\end{IEEEeqnarray*}

As a result, for $j \leq c_1 \sqrt{n} \log{n}$, it is:
\begin{IEEEeqnarray*}{lCl}
P(\varepsilon_1^j)  & \geq  &  	\Bigg(		1-e^{-k} \Big(1-\frac{k}{n}\Big)^{-c_1\sqrt{n}\log{n}}	\Bigg)^{c_1\sqrt{n}\log{n}}\\
P(\varepsilon_1^j)  & \geq  &  	1-e^{-k} \Big(1-\frac{k}{n}\Big)^{-c_1\sqrt{n}\log{n}}
\end{IEEEeqnarray*}

It holds asymptotically:
\begin{IEEEeqnarray*}{rCll}
c_1\sqrt{n}\log{n}  & \leq  &  n 	& \Leftrightarrow
\\
\Big(1-\frac{k}{n}\Big)^{c_1\sqrt{n}\log{n}}		&	\geq   &			\Big(1-\frac{k}{n}\Big)^{n}	& \Leftrightarrow
\\
\Big(1-\frac{k}{n}\Big)^{-c_1\sqrt{n}\log{n}}		&	\leq   &			\Big(1-\frac{k}{n}\Big)^{-n}	& \Leftrightarrow
\\
1-e^{-k} \Big(1-\frac{k}{n}\Big)^{-c_1\sqrt{n}\log{n}}		&	\geq		&	1-e^{-k}\Big(1-\frac{k}{n}\Big)^{-n}
\end{IEEEeqnarray*}

Therefore:
\begin{IEEEeqnarray*}{lCl}
P(\varepsilon_1^j)  & \geq  &  	1-e^{-k}\Big(1-\frac{k}{n}\Big)^{-n}
\end{IEEEeqnarray*}

and since $k \geq 1$, we have:
\begin{IEEEeqnarray*}{lCl}
P(\varepsilon_1^j)  & \geq & 					1-e^{-k}\Big(1-\frac{1}{n}\Big)^{-n}
\\
 & = & 					1-e^{-k}e
\\
 & = & 					1-e^{1-k}
\end{IEEEeqnarray*}

For $k=r\log{n},~r>1$, we have:
\begin{IEEEeqnarray*}{lCl}
P(\varepsilon_1^j)  & \geq  &  1-e^{1-r\log{n}}
\\
 & = & 1-en^{-r}
\end{IEEEeqnarray*}

The probability that we fail in every iteration $i=0, \ldots, c_1 \sqrt{n}\log{n}$ to find a vertex $s_i$ such, that $l(\{s_i,t\}) \in (c_1 \sqrt{n} k, c_1 \sqrt{n} k + \sqrt{n})$ is:
\begin{IEEEeqnarray*}{lCl}
P(all fail)  & =  &  \overbrace{
\Big( 1 - \frac{1}{\sqrt{n}}  \Big) \cdot	\Big( 1 - \frac{1}{\sqrt{n}}  \Big) \cdot	\ldots  \Big( 1 - \frac{1}{\sqrt{n}}  \Big)}^{c_1 \sqrt{n}\log{n} \text{ factors}}
\\
 & = & \Big( 1 - \frac{1}{\sqrt{n}}  \Big)^{c_1 \sqrt{n}\log{n}}
\\
 & = &  e^{-c_1 \log{n}} = n^{-c_1}
\end{IEEEeqnarray*}

The probability that we succeed in some iteration of the algorithm is:
\begin{IEEEeqnarray*}{lCl}
P(success) & = & \Big( 1- P(all fail) \Big) P(\varepsilon_1^j)  
\\
 & \geq &	\Big(	1-n^{-c_1}	\Big) \Big(	1-en^{-r}	\Big)
\end{IEEEeqnarray*}

Therefore, the following theorem holds:\\[0.5cm]

\begin{thm} \label{1}
For any constants $c_1, r > 1$, given two vertices $s, t$, $s \not= t$, of the normalized uniform random temporal clique, $K_n$, the probability to arrive, starting from $s$, to $t$ at time at most \[t_0 = c_1 \sqrt{n} (\log{n}) k  + \sqrt{n} \text{, where }k=r\log{n}\] is at least \[ \Big(	1-n^{-c_1}	\Big) \Big(	1-en^{-r}	\Big).\]
\end{thm}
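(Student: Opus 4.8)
The plan is to deduce Theorem~\ref{1} from the two probability estimates already established in the analysis of \textsc{Extend-Try}, once we make precise what a ``Success'' output of the algorithm guarantees and why the two estimates may be multiplied.

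First I would record the deterministic content of ``Success''. Write $M=c_1\sqrt n\log n$ for the number of iterations and $W=\big(c_1\sqrt n(\log n)k,\ c_1\sqrt n(\log n)k+\sqrt n\big)$ for the final window, with $k=r\log n$. If the run ends with ``Success'' at iteration $i$, the algorithm has produced vertices $s_0=s,s_1,\dots,s_i$ with $l(\{s_{j-1},s_j\})\in((j-1)k,jk)$ for $1\le j\le i$ and $l(\{s_i,t\})\in W$. The labels along $s_0s_1\cdots s_i$ are strictly increasing (the label at step $j$ exceeds $(j-1)k$, which is larger than the label at step $j-1$), and $l(\{s_i,t\})>c_1\sqrt n(\log n)k\ge ik$ exceeds the label at step $i$; hence $s_0s_1\cdots s_i\,t$ is a bona fide $(s,t)$-journey whose arrival time $l(\{s_i,t\})$ is strictly below $c_1\sqrt n(\log n)k+\sqrt n=t_0$. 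Thus it suffices to show $P(\text{Success})\ge(1-n^{-c_1})(1-en^{-r})$.

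Next I would introduce two events whose intersection forces ``Success''. Let $\mathcal E$ be the event $\varepsilon_1^{M}$ of the analysis, that the greedy extension survives all $M$ steps; on $\mathcal E$ the algorithm builds a well-defined path $s_0,\dots,s_M$ (or succeeds earlier). Let $\mathcal F$ be the event that, for this path, $l(\{s_i,t\})\in W$ for some $0\le i\le M$. On $\mathcal E\cap\mathcal F$ the algorithm reaches $s_i$ for the least such $i$ and takes the shortcut, so $\mathcal E\cap\mathcal F\subseteq\{\text{Success}\}$. For $P(\mathcal E)$ I quote the bound already derived: $P(\varepsilon_1^{j})\ge 1-e^{1-k}=1-en^{-r}$ for every $j\le M$, so $P(\mathcal E)\ge 1-en^{-r}$. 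For $P(\mathcal F\mid\mathcal E)$ I use the principle of deferred decisions: since $t$ is explicitly excluded from the set of extension candidates, the labels $l(\{s_0,t\}),\dots,l(\{s_M,t\})$ are never consulted while the path is built, so conditional on the realized path they are i.i.d.\ uniform on $\{1,\dots,n\}$, and the $M+1$ edges involved are pairwise distinct and disjoint from the extension edges; each lands in $W$ with probability $\sqrt n/n=1/\sqrt n$, whence $P(\mathcal F\mid\mathcal E)\ge 1-(1-1/\sqrt n)^{M}\ge 1-e^{-c_1\log n}=1-n^{-c_1}$. Multiplying gives $P(\text{Success})\ge P(\mathcal E)\,P(\mathcal F\mid\mathcal E)\ge(1-en^{-r})(1-n^{-c_1})$, as required.

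The only genuinely delicate point is this conditional-independence step: one must be sure that conditioning on $\mathcal E$ (equivalently, on the particular extension path the algorithm produces) does not skew the distribution of the shortcut labels $l(\{s_i,t\})$. The reason it does not is exactly that these edges are never examined by the algorithm before a shortcut is attempted, so the event ``the algorithm builds precisely this path'' is measurable with respect to a set of edge-labels disjoint from $\{\,l(\{s_i,t\}):0\le i\le M\,\}$. Everything else---the estimate $P(\varepsilon_1^{j})\ge 1-en^{-r}$ and the geometric bound $(1-1/\sqrt n)^{M}\le n^{-c_1}$---is already in the analysis of \textsc{Extend-Try}, so beyond the independence argument there is nothing more than the arithmetic performed there (plus a harmless off-by-one in counting the shortcut attempts).
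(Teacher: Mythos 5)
Your proposal is correct and follows essentially the same route as the paper: the same decomposition into the event that the greedy extension survives all $c_1\sqrt{n}\log n$ iterations (the paper's $\varepsilon_1^j$, bounded below by $1-en^{-r}$) and the event that some shortcut edge to $t$ lands in the final window (complement bounded by $(1-1/\sqrt n)^{c_1\sqrt n\log n}\le n^{-c_1}$), multiplied together. The only difference is that you explicitly justify the multiplication via deferred decisions and verify that ``Success'' really yields a journey arriving by $t_0$, two points the paper asserts without comment; this is a welcome tightening rather than a different argument.
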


$~$\\

\noindent \textit{Remark.} 
For the ``on-line" case, where a traveller starts from $s$ and wants to find a small journey to $t$, but he can only see the edges (\textit{arcs}) out of visited vertices, we conjecture that Algorithm \ref{alg:extend} gives a very tight bound on the expected arrival time.

\section{Temporal Diameter}\label{sec:td}
In this section, we study the concept of the temporal diameter of a uniform random temporal graph.

\begin{mydef}
Consider an instance $G(L)$ of a U-RTG. We denote the maximum of all distributional temporal distances between all pairs of vertices of $G(L)$ by $d(G(L))$:
\[ d(G(L)) = max_{s,t \in V(G)} \delta ' (s,t). \]
We define $diam(G(L)) = min\{d(G(L)), n'\}$.
Then, the Expected or Temporal Diameter of G, denoted by $TD$, is given by the following formula:
\[TD(G) = E\Big(diam\big(G(L)\big)\Big) =\sum_L diam\big(G(L)\big) \cdot P(L)\]
, where $P(L)$ is the probability for labelling $L$ to occur.
\end{mydef}

We can easily prove that every temporal graph's temporal diameter, $TD$, is equal or greater than its maximum expected temporal distance, $MD$. 

\begin{thm} It holds that:
\[ TD(G) \geq MD(G),\text{ for every temporal graph } G. \]
\end{thm}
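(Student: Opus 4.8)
The plan is to prove the inequality pointwise over the sample space of labelings and then pass to expectations. Fix an arbitrary pair of vertices $s^*,t^* \in V(G)$ and let $L$ be any labeling that can occur. By the definition $d(G(L)) = \max_{s,t} \delta'(s,t)$, we immediately have $\delta'(s^*,t^*) \leq d(G(L))$; this also holds in the extended sense when $\delta'(s^*,t^*) \to \infty$, since then $d(G(L)) \to \infty$ as well. The map $x \mapsto \min\{x,n'\}$ is nondecreasing, so applying it to both sides gives
\[
\delta(s^*,t^*) = \min\{\delta'(s^*,t^*),n'\} \leq \min\{d(G(L)),n'\} = diam(G(L)),
\]
and this holds for every labeling $L$.

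Next I would take expectations over the random labeling. Since $\delta(s^*,t^*) \leq diam(G(L))$ holds for each $L$, monotonicity of expectation yields $E\big(\delta(s^*,t^*)\big) \leq E\big(diam(G(L))\big) = TD(G)$. There is no integrability concern here: both $\delta(s^*,t^*)$ and $diam(G(L))$ are truncated at the fixed finite constant $n'$ (and bounded below by $1$), so every expectation involved is finite and well defined. Finally, because $E\big(\delta(s^*,t^*)\big) \leq TD(G)$ holds for every choice of $s^*,t^*$, it holds in particular for the pair attaining $MD(G) = \max_{s,t} E\big(\delta(s,t)\big)$, whence $MD(G) \leq TD(G)$.

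I do not expect a serious obstacle: the argument is essentially pointwise domination followed by a monotone-expectation step. The only point needing care is the consistent handling of the $\infty$ case in $\delta'$ together with the truncation by $n'$ — one must apply the cap on both sides before invoking monotonicity, after which all quantities are finite and the comparison is immediate.
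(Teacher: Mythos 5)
Your proof is correct. At its core it rests on the same fact as the paper's proof --- that the expectation of a maximum dominates the maximum of the expectations --- but your execution is genuinely different and, in one respect, more careful. The paper invokes the ``Reverse Fatou'' inequality $E(\max_i X_i) \geq \max_i E(X_i)$ and then splits into two cases according to whether all distributional distances satisfy $\delta'(s,t) \leq n'$ or some pair has $\delta'(s,t) \geq n'$; the trouble with that split is that the condition depends on the random labeling $L$, not just on the graph, so treating it as a deterministic dichotomy (and asserting that in the second case both $TD$ and $MD$ equal $n'$) is loose. You sidestep this entirely: by observing that $x \mapsto \min\{x,n'\}$ is nondecreasing, you get the pointwise domination $\delta(s^*,t^*) \leq diam(G(L))$ for \emph{every} labeling $L$, including those where $\delta'$ is infinite, and then a single application of monotonicity of expectation finishes the argument with no case analysis and no appeal to any named lemma. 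Your version is both more elementary and more airtight; the paper's version makes the connection to the general ``max of expectations'' principle more explicit but pays for it with the imprecise case split.
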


\begin{proof}
To prove this, we use the Reverse Fatou's Lemma\cite{durrett}:\\

\begin{them} [Reverse Fatou's lemma]
 If $X_n\geq 0$, for all $n$, then
\[   E(lim_n sup X_n) \geq lim_n sup E(X_n).  \]
\end{them}

In other words, the expected value of the maximum of a set of random variables is at least equal to the maximum of the expected values of those variables.

Now, notice that the Temporal Diameter of a temporal graph $G$ is actually the expected value of the maximum of all distributional temporal distances, that is $E(max_{s,t \in V(G)} \delta ' (s,t))$, in the case where we have $\delta ' (s,t) \leq n'$, for every pair of vertices $s,t \in V(G)$. In that case, the Maximum Expected Temporal Distance of $G$ is actually the maximum of the expected values of all pairs of vertices' distributional temporal distances, that is $max_{s,t \in V(G)} E(\delta ' (s,t))$. Therefore, in that case, using the above described Reverse Fatou's Lemma, we conclude that:
\[ TD(G) \geq MD(G). \]

In the case, where there is at least one pair of vertices $s,t \in V(G)$ such, that $\delta '(s,t) \geq n'$, both the temporal diameter and the maximum expected temporal distance of $G$ are equal to $n'$.

Thus, we conclude that it generally applies that:
\[ TD(G) \geq MD(G),\text{ for every temporal graph } G. \]
\end{proof}

We will now prove that the time $t_0 - o(t_0)$ (see. Theorem \ref{1}) is an upper bound of the normalized uniform random temporal clique's temporal diameter, $TD$, and, thus, is an upper bound of its maximum expected temporal distance, $MD$. 

\begin{thm}
The quantity $t_0 - o(t_0)$ is an upper bound of both the temporal diameter, $TD$, and the maximum expected temporal distance, $MD$, of the normalized U-RT clique.
\end{thm}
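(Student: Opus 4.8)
\emph{Proof proposal.} The plan is to bound the temporal diameter directly and then read off the bound on $MD$ from the inequality $MD(G)\le TD(G)$ established above. Work in the normalized uniform random temporal clique $K_n$, so that $a=n$ and $n'=n/2$, and recall $t_0=c_1\sqrt n(\log n)k+\sqrt n$ with $k=r\log n$. The engine of the proof is Theorem \ref{1}, which, restated, says that for each ordered pair $(s,t)$ of distinct vertices the random labeling $L$ satisfies $\delta'(s,t)\le t_0$ with probability at least $(1-n^{-c_1})(1-en^{-r})$ (the theorem exhibits, via \textsc{Extend-Try}, an $(s,t)$-journey of arrival time at most $t_0$ whenever that algorithm succeeds, and a foremost journey can only do better). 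A union bound over the at most $\binom n2$ pairs then shows that the probability of the ``bad'' event $\overline{\mathcal G}$ --- that \emph{some} pair has $\delta'(s,t)>t_0$ --- is at most $p:=\binom n2\bigl(n^{-c_1}+en^{-r}\bigr)$, where I have used $1-(1-x)(1-y)\le x+y$. Note the per-pair events are not independent (they all depend on the single labeling $L$), so the union bound, rather than any concentration inequality, is the right tool.

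On the complementary ``good'' event $\mathcal G$ we have $d(G(L))=\max_{s,t}\delta'(s,t)\le t_0$, hence $\mathrm{diam}(G(L))=\min\{d(G(L)),n'\}\le t_0$. On $\overline{\mathcal G}$ I fall back on the crude deterministic bound that in a clique one can always take the direct edge $\{s,t\}$, whose label is at most $a=n$; thus $\delta'(s,t)\le n$ always and $\mathrm{diam}(G(L))=\min\{d(G(L)),n'\}\le n'=n/2$ no matter what. Splitting the expectation over $\mathcal G$ and $\overline{\mathcal G}$ gives
\[
TD(K_n)=E\bigl(\mathrm{diam}(G(L))\bigr)\le t_0\cdot P(\mathcal G)+\tfrac n2\cdot P(\overline{\mathcal G})\le t_0+\tfrac n2\,p .
\]
Since $p\le\tfrac12 n^2\bigl(n^{-c_1}+en^{-r}\bigr)$ and $t_0=\Theta(\sqrt n\log^2 n)$, the correction term satisfies $\tfrac n2 p=O\bigl(n^{3-c_1}+n^{3-r}\bigr)=o(t_0)$ as soon as the constants $c_1,r$ are taken larger than $5/2$ --- a genuinely stronger requirement than the $c_1,r>1$ that Theorem \ref{1} alone needs. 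Hence $TD(K_n)\le(1+o(1))\,t_0$, i.e.\ $t_0$ up to lower-order terms bounds $TD(K_n)$ from above, and therefore it bounds $MD(K_n)\le TD(K_n)$ as well, which is the claimed statement.

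The main obstacle is exactly this final balancing. The rare ``bad'' labelings carry a deterministic penalty of order $n$, which swamps the target order $t_0=\Theta(\sqrt n\log^2 n)$; so the union bound must beat not merely $1$ but $t_0/n=\Theta(\log^2 n/\sqrt n)$, which is what forces the polynomial failure exponents $c_1,r$ to be pushed appreciably above the thresholds of Theorem \ref{1}. Everything else --- the per-pair guarantee from \textsc{Extend-Try}, the min with $n'$ in the definition of $\mathrm{diam}$, and the deferral to $MD\le TD$ --- is routine bookkeeping once this point is handled.
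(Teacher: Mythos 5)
Your proof follows the same route as the paper's: apply Theorem \ref{1} to each ordered pair, take a union bound over all pairs, split the expectation of the diameter into the good event (bounded by $t_0$) and the rare bad event (bounded deterministically by a quantity of order $n$), and then read off the bound on $MD$ from $MD\le TD$. Your version is in fact more careful than the paper's: the paper's union-bound step contains an arithmetic slip ($n(n-1)\cdot 2en^{-c_1}$ is at most $2en^{2-c_1}$, not $2en^{-c_1-2}$), and your observation that the constants must be pushed to $c_1,r>5/2$ --- rather than the $c_1,r>1$ of Theorem \ref{1} --- so that the bad-event contribution of order $n^{3-c_1}+n^{3-r}$ is genuinely $o(t_0)=o(\sqrt n\log^2 n)$ is a correction the paper's argument actually needs.
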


\begin{proof}
Let $s,t$ be two vertices of the normalized U-RT clique. We call $E_{st}$ the following event:
\[ \text{``We arrive, starting from $s$, to $t$ at time at most $t_o$''}\]
where $t_0 = c_1 \sqrt{n} (\log{n}) k + \sqrt{n}, ~ c_1 >1, k= r\log{n},~ r>1$.\\
It holds that:

\begin{IEEEeqnarray*}{lCl}
P(E_{st}) & \geq &  \Big(	1-n^{-c_1}	\Big) \Big(	1-en^{-r}	\Big) 
\\
 & \geq &	1- n^{-c_1} - e n ^{-r}
\end{IEEEeqnarray*}

For $r=c_1$, the above relation becomes:

\begin{IEEEeqnarray*}{lCl}
P(E_{st})  & \geq &	1- n^{-c_1} - e n ^{-c_1}
\\
 & \geq & 1- 2 e n ^{-c_1}
\end{IEEEeqnarray*}

Therefore, the probability that the complement of $E_{st}$ occurs is:
\begin{IEEEeqnarray*}{lCl}
P(\overline{E}_{st})  &= &	1- P(E_{st}) 
\\
 & \leq & 2 e n ^{-c_1}
\end{IEEEeqnarray*}

Thus, the probability that there exist two vertices $s,t$ such that we arrive, starting from $s$, to $t$ at time greater than $t_0$ is:

\begin{IEEEeqnarray*}{lCl}
P(\exists s,t : \overline{E}_{st})  & \leq &	n (n-1) 2 e n ^{-c_1}
\\
 & \leq & 2 e n ^{-c_1 -2}
\end{IEEEeqnarray*}

Let us denote by $T$ the $max\{ a_{st}, s,t \in V(K_n)\}$, where $a_{st}$ is the greatest arrival time amongst all ($s,t$)-journeys' arrival times. Then, we have:

\begin{IEEEeqnarray*}{lCl}
P(\exists s,t : \overline{E}_{st})  & = &	P(T > t_0)
\\
 & \leq & 2 e n ^{-c_1 -2}
\end{IEEEeqnarray*}

It is:
\begin{IEEEeqnarray*}{lCl}
TD  & \leq &	E(max\{ a_{st}, s,t \in V(K_n)\})
\\
 & \leq & ( 1 - 2 e n ^{-c_1 -2} ) \cdot t_0 + n \cdot 2 e n ^{-c_1 -2}
\\
& \leq & t_0 -o(t_0)
\end{IEEEeqnarray*}

Since $TD(G) \geq MD(G),\text{ for every temporal graph } G $, we conclude that in the case of the normalized U-RT clique, it is:

\[  MD \leq TD \leq t_0 -o(t_0) \]

\end{proof}

\section{An optimization problem: The Bridges' problem} \label{sec:bridge}
We will now study an optimization problem concerning the temporal multigraph shown in Figure \ref{fig:poly1}.
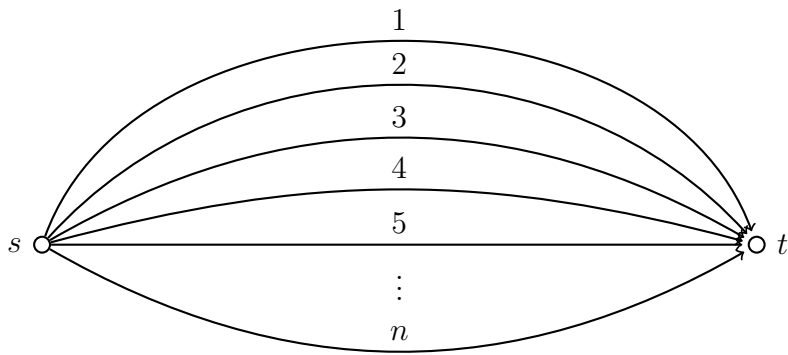
\begin{figure}[htbp]
\begin{center}

\[\begin{tikzpicture}[thick,scale=0.95,->,shorten >=2pt]
	\vertex (1) at (0,0) [label=left:$s$] {};
	\vertex (2) at (10,0) [label=right:$t$] {};
	\path
		(1) edge [bend left=70] node[above]{$1$}  (2)
		(1) edge [bend left=48] node[above]{$2$} (2)
		(1) edge [bend left=30] node[above]{$3$} (2)
		(1) edge [bend left=15] node[above]{$4$} (2)
		(1) edge node[above]{$5$} (2)
		(1) edge [line width=0pt,black!0.05,bend right=18] node[above]{\textcolor{black}{$\vdots$}} (2)
		(1) edge [bend right=30] node[above]{$n$} (2)
	;
\end{tikzpicture}\]
\end{center}

\rule{35em}{0.5pt}
\caption{The bridges' problem}
\label{fig:poly1}
\end{figure}

\noindent\underline{\textbf{\textit{The problem}}}\\
$n$ people are located on one bank of a river (see vertex $s$, Figure \ref{fig:poly1}) and want to go to the other side (see vertex $t$, Figure \ref{fig:poly1}). Each one can go across one of a total of $n$ bridges that connect the two riversides, paying individual cost equal to $\displaystyle{1 + \frac{i}{m_i}}$, where $i$ stands for the number of the bridge they pass and $m_i$ stands for the total sum of people that cross that bridge. Thus, the total cost payed by $m$ people to cross the \textbf{$i^{th}$} bridge, $i=1,2,\ldots, n$, is:\[cost[i]=m_i+i\]

\noindent We denote by \emph{maximum cost payed} the maximum, over all bridges $i$, cost $m_i +i$:
\[maximum~cost~payed = max\{m_i + i, i=1,2, \ldots, n :~bridge\} \]

\noindent How should the $n$ people be assigned to the bridges so, that the maximum cost payed is minimized?\\
We denote the minimum, over all assignments of $n$ people to $n$ bridges, maximum cost payed by $OPT$, that is:
\[ OPT = min_{all~ assignments}\{ maximum ~cost~ payed\} \]

\noindent \textit{Remark.} In another interpretation of the bridges' problem, as we call the above described problem, we consider the multi-labeled temporal digraph of two vertices $s,t$ and one single edge $\{s,t\}$ which is assigned the discrete time labels $1,2,\ldots, n$ (see figure \ref{fig:poly2}).

\begin{figure}[htbp]
\begin{center}

\[\begin{tikzpicture}[thick,scale=0.95,->,shorten >=2pt]
	\vertex (1) at (0,0) [label=left:$s$] {};
	\vertex (2) at (6,0) [label=right:$t$] {};
	\path
		(1) edge node[above]{$1, 2, \ldots, n$}  (2)
	;
\end{tikzpicture}\]
\end{center}

\rule{35em}{0.5pt}
\caption{The bridges' problem (another interpretation)}
\label{fig:poly2}
\end{figure}
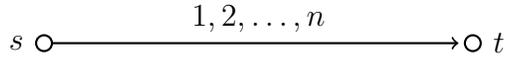

\noindent Here we have a single bridge which is available everyday from day $1$ to day $n$. As time progresses the cost someone needs to pay to move from $s$ to $t$ increases. Again, one has to pay individual cost equal to $\displaystyle{1 + \frac{i}{m_i}}$, where $i$ stands for the day on which he decides to move from $s$ to $t$ and $m_i$ stands for the total sum of people decide to move from $s$ to$t$ on that same day. Therefore, the total cost payed by $m$ people who move from $s$ to $t$ on the \textbf{$i^{th}$} day, $i=1,2,\ldots, n$, is:\[cost[i]=m_i+i\]

\begin{thm}
We can compute the assignment of $n$ persons to $n$ bridges that achieves the $OPT$ in polynomial time $O(n^2)$.
\end{thm}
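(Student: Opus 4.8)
The plan is to characterize the optimum value $OPT$ by a monotonicity argument and then show it can be found by a simple feasibility test that runs in linear time per candidate, iterated over $O(n)$ candidates. First I would observe that if we \emph{fix} a target value $c$ and ask whether there is an assignment of the $n$ people to the $n$ bridges with $\max_i (m_i + i) \le c$, then the constraint on bridge $i$ is simply $m_i \le c - i$ (and $m_i \ge 0$), so bridge $i$ can absorb at most $\max\{0, c-i\}$ people. Hence a target $c$ is feasible if and only if $\sum_{i=1}^{n} \max\{0,\, c - i\} \ge n$, and this sum is easy to evaluate in closed form: if $c \le n$ only bridges $1,\dots,c-1$ have positive capacity and the sum is $\sum_{j=1}^{c-1} j = \binom{c}{2}$, while for $c > n$ it is $\sum_{i=1}^{n}(c-i) = nc - \binom{n+1}{2}$. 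The feasibility predicate is clearly monotone in $c$, so $OPT$ is the least integer $c$ with $\sum_{i=1}^{n}\max\{0,c-i\}\ge n$; plugging in the closed form gives $OPT = \lceil (1+\sqrt{1+8n})/2\rceil$ (for the regime $n$ not too large), but I would keep the argument at the level of the feasibility test rather than the explicit root, since the algorithm only needs to find the threshold.

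Next I would turn the feasibility test into a constructive assignment. Given the threshold value $c = OPT$, I would fill bridges greedily from the cheapest index upward: assign to bridge $1$ up to $c-1$ people, then bridge $2$ up to $c-2$ people, and so on, stopping as soon as all $n$ people have been placed. By the capacity bound above, the total capacity of bridges $1,\dots,n$ at level $c$ is at least $n$, so this greedy fill succeeds and never violates $m_i + i \le c$; and since $c-1$ was infeasible, no assignment does better, so the produced assignment is optimal. Actually one must be slightly careful about the edge case where a bridge is only partially filled at the end — but that is automatically fine because partially filling bridge $i$ only makes $m_i$ smaller than its cap, so the constraint $m_i + i \le c$ still holds.

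For the running time: computing the closed form of the feasibility sum is $O(1)$ per value of $c$, and the candidate thresholds lie in $\{1,\dots,n+\text{something}\}$, so a linear scan (or even a binary search) finds $OPT$ in $O(n)$ time, and the greedy fill touches each bridge once and distributes $n$ people, costing $O(n)$. That already gives $O(n)$ overall, comfortably within the claimed $O(n^2)$; I would state the bound as $O(n^2)$ to match the theorem but remark that it is in fact faster, or alternatively present a deliberately naive version (e.g. testing each $c$ by summing the $n$ capacities term by term, which is $O(n)$ per test and $O(n^2)$ in total) so that the stated bound is literally what the described algorithm achieves. The main obstacle I anticipate is not the algorithm itself but pinning down the exact form of $OPT$ and the boundary behavior when $n$ is large enough that $c > n$ (so that \emph{every} bridge has positive capacity and the sum changes form); handling that case split cleanly in the feasibility formula, and making sure the greedy fill and the lower bound argument both respect it, is where the care is needed. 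Everything else reduces to the routine arithmetic of triangular numbers.
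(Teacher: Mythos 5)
Your proposal is correct, and it takes a genuinely different route from the paper. The paper gives an incremental greedy algorithm that processes the $n$ people one at a time, always assigning the next person to the bridge of currently minimum cost $m_i+i$, and proves optimality by induction on $n$ using the structural invariant that at every stage the used bridges' costs differ by at most one; the value $OPT=\lceil\sqrt{2n}+\tfrac12\rceil$ is then extracted separately by an averaging argument over the nonempty bridges. You instead observe that the problem decomposes into independent per-bridge capacity constraints: a target $c$ is achievable iff $\sum_{i=1}^{n}\max\{0,c-i\}\ge n$, the predicate is monotone in $c$, and the least feasible $c$ is $OPT$, realized by filling bridges $1,2,\dots$ up to their capacities $c-i$. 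Your optimality argument is arguably cleaner --- the lower bound comes directly from infeasibility of $c-1$ rather than from maintaining an inductive invariant --- and it yields the exact threshold $\lceil(1+\sqrt{1+8n})/2\rceil$ with no approximation, whereas the paper's averaging derivation of $\lceil\sqrt{2n}+\tfrac12\rceil$ is somewhat looser. Your algorithm also runs in $O(n)$ (or $O(n^2)$ in the deliberately naive variant), strictly inside the claimed bound. What the paper's version buys in exchange is its online character: it places people one at a time without knowing $n$ in advance and maintains optimality of every prefix, which your threshold-then-fill construction does not do. One small point worth making explicit in your write-up: the maximum cost is taken only over \emph{nonempty} bridges, but this does not disturb your feasibility characterization, since leaving a bridge empty only removes it from the maximum and any bridge filled to at most $c-i$ people respects the bound $m_i+i\le c$.
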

\begin{proof}
We provide Algorithm \ref{alg:bridge} and show that it computes the assignment that achieves $OPT$.

\begin{algorithm}[h]
\caption{The bridges problem solving algorithm}
\label{alg:bridge}
\begin{algorithmic}[1]
\Procedure {Bridges}{$n$}
	\State cost[] is a 1$\times$n array which holds the bridges' costs;
	\State content[] is a 1$\times$n array which holds the bridges' contents;		 \Comment{a.k.a \\ \hfill  how\\ \hfill many people\\ \hfill  are on each\\ \hfill  bridge}
	\State m := n;														 \Comment{m is the number of bridges}
	\For	{i = 1 ... m}
		\State content[i] := 0;											 \Comment{Initializations}
		\State cost[i] := i;
	\EndFor

\algstore{alg:bridge}
  \end{algorithmic}
\end{algorithm}

\clearpage

\begin{algorithm}
  \ContinuedFloat
  \caption{The bridges problem solving algorithm (continued)}
  \begin{algorithmic}
      \algrestore{alg:bridge}

	\For {i = 1 ... n}
		\State bridge := 1; 												 \Comment{Initialize the bridge that the $i^{th}$ person will pass}
		\For { j = 2 ... m}												 \Comment{Find the bridge that gives the minimum\\ \hfill  possible cost}
			\If { cost[j] $<$ cost[bridge] }
				\State bridge := j;
			\EndIf
		\EndFor
		\State content[bridge] := content[bridge]+1;						 \Comment{Add the $i^{th}$ person to \\ \hfill the selected \\ \hfill bridge's content}
		\State cost[bridge] := cost[bridge]+1;								 \Comment{Calculate the right new cost}
	\EndFor

	\For {i = 1 ... m}
		\If {content[i] == 0}
			\State cost[i] := 0;
		\EndIf
		\If {content[i] == 1}
			\State \textbf{Write} content[i] , `` person passes bridge \#'', i , `` who $~~~~~~~$ $~~~~~~~~~~~~~~$ therefore has to pay cost equal to '', cost[i];

		\Else
			\State \textbf{Write} content[i] , `` people pass bridge \#'', i , `` who therefore  $~~~~~~~~~~~~~~~~$ have to pay cost equal to '', cost[i];		
		\EndIf			 \Comment{Print the bridges' costs}

	\EndFor
\EndProcedure
\end{algorithmic}
\end{algorithm}

The algorithm assigns the $i^{th}$ person to the bridge, for which the current minimum cost is payed. If there are more than one such bridges, the algorithm assigns the $i^{th}$ person to the first one in order. It is trivial to see that the algorithm's running time is $O(n^2)$.\\

\noindent\textbf{Proof of correctness} 
We will prove the validity of the algorithm \ref{alg:bridge} by induction on the number $n$ of persons.

\begin{itemize}
\item For $n=1$, the algorithm sets the number of bridges to be $m=1$ and the sole bridge's content and cost to be equal to 1. In the main loop, the sole person is assigned to the bridge, paying cost equal to:\[cost[1] = 2\]
So, actually, the algorithm solves the problem for $n=1$ person.
\item Assume that the algorithm solves the problem for $n=k$ people.
\item We will show that the algorithm solves the problem for $n=k+1$ people. 

Before continuing, let us consider the following: Let $n_1, n_2 \in \mathbb{N}$ numbers of people, with $n_1>n_2$. It is obvious that the minimum possible maximum cost for $n=n_1$ people is at least equal to the minimum possible maximum cost for $n=n_2$ people.

Let us observe now that the procedures performed by the algorithm in the main loop for $k$ people, and the results obtained through these, are identical to those performed and obtained respectively for $k+1$ people, except that for $k+1$ people, there is a $(k+1)^{th}$ bridge, which throughout the execution of these processes has zero content, and there is also an additional execution of the loop. At the beginning of this $(k+1)^{th}$ execution, the algorithm has already assigned the $k$ people to the brisges in a way that we obtain the minimum possible maximum cost.

The algorithm, by construction, assigns the people to the bridges in a way that their costs are ordered by (not necessarily strictly) descending order and indeed one of the following two possible events occur:
\begin{equation*}
\left\{
\begin{array}{l}
\text{all the bridges have the same cost, denoted by } OPT\\
\text{or}\\
\text{some bridges have cost } OPT \text{ and some others have cost }OPT-1.
\end{array} \right.
\end{equation*}
In the second case, the algorithm is obviously going to assign the $(k+1)^{th}$ person to the first in order bridge that has cost equal to $OPT-1$, thereby maintaining the maximum cost that occurs on the bridges to a minimum, that is $OPT$.

In the first case, if $r \leq k+1$ is the number of the last bridge that has positive content,  $content[r]$, then it is:

\begin{equation*}
\left\{
\begin{array}{l}
r+content[r] = OPT\\
\text{But: } content[r] \geq 1 \text{ and so: } r+ content[r] \geq  r+1
\end{array} \right\}
\Rightarrow OPT \geq r+1
\end{equation*}

Also, since the$(r+1)^{th}$ bridge has zero content, it is:\[cost[r+1] = r+1\]
The algorithm checks which of the $ k +1 $ bridges has the minimum cost to assign the $(k+1)^{th}$ person to that bridge. If $OPT=r+1$, then the algorithm assigns the last person to the $1^{st}$ bridge. Otherwise, it assigns it to the $ (r +1)^{th} $ bridge. This way, it ensures the minimum possible maximum cost for the $ k +1 $ bridges.

Therefore, the algorithm solves the problem for $ n = k +1 $ people.
\end{itemize}

\end{proof}

We will now calculate the value of the $OPT$. Again, let us denote by $r$ the number of bridges that have a positive content, i.e. are not empty, in the optimal case which the Algorithm \ref{alg:bridge} computes. For the sake of brevity, let us also denote by $l_i$ the content of the $i^{th}$ bridge. Since the average cost of the non empty bridges is equal or less than the maximum cost that occurs on those bridges, the following holds for the optimal case:
\[ \frac{\displaystyle\sum_{i=1}^{r} (i+l_i)}{r} \leq OPT \]
Therefore, we have:
\begin{equation}\label{eq:4}
\displaystyle\sum_{i=1}^{r} (i+l_i) \leq r OPT
\end{equation}

Furthermore, it is easy to see that since, in the optimal case that the algorithm computes, the $OPT$ is greater than any bridge's cost by at most \textit{one}, it holds that:
\begin{equation}\label{eq:5}
 rOPT - r \leq \displaystyle\sum_{i=1}^{r} (i+l_i) 
\end{equation}

By the relations \eqref{eq:4} and \eqref{eq:5}, we have:
\begin{IEEEeqnarray*}{lCCCl l}
rOPT - r & \leq & \displaystyle\sum_{i=1}^{r} (i+l_i)  & \leq & rOPT & \Leftrightarrow
\\
rOPT - r & \leq & \displaystyle\sum_{i=1}^{r} i +\displaystyle\sum_{i=1}^{r} l_i  & \leq & rOPT & \Leftrightarrow
\\
rOPT - r & \leq & \frac{r(r+1)}{2} +n & \leq & rOPT & \Leftrightarrow
\\
OPT - 1 & \leq & \frac{(r+1)}{2} +\frac{n}{r} & \leq & OPT 
\end{IEEEeqnarray*}

Now, the quantity $ \frac{(r+1)}{2} +\frac{n}{r} $ is minimized at $r=\sqrt{2n}$ and at that point, its value is equal to $\sqrt{2n} +\frac{1}{2}$. Therefore, we conclude that:
\[OPT = \lceil \sqrt{2n} +\frac{1}{2} \rceil \]

\section{Conclusions and further research}\label{sec:concl}
There are several open problems related to the findings of the present work. We initiated here the random availability of edges where the selection of time-labels, and thus the selection of moments in time at which the edges are available, follows the uniform distribution. There are still other interesting approaches concerning what distribution the selection of time-labels could follow (see F-CASE in Section \ref{sec:def}). Another approach that is yet to be examined is that of the multi-labeled temporal graphs, on which we could search for statistical properties respective to  the ones we studied within the present work. Yet another interesting direction which we did not consider in this work is to find upper bounds on the maximum expected temporal distance and the temporal diameter of any U-RTG (or F-RTG). Further research could also focus on calculating the actual value of these properties, e.g. in the case of the normalized uniform random temporal clique.

\newpage


\begin{thebibliography}{1}
\thispagestyle{empty}

\bibitem[MMCS13]{spirakis} George Mertzios, Othon Michail, Ioannis Chatzigiannakis, and Paul G. Spirakis (2013). {\em Temporal Network Optimization Subject to Connectivity Constraints} Springer

\bibitem[KKK00]{kempe} D. Kempe, J. Kleinberg, and A. Kumar (2000). {\em Connectivity and inference problems for temporal networks} In Proceedings of the 32nd annual ACM symposium on Theory of computing (STOC)

\bibitem[MR02]{molloy} M. Molloy and B. Reed (2002). {\em Graph colouring and the probabilistic method, volume 23} Springer

\bibitem[D10]{durrett} Durrett, R. (2010). {\em Probability: Theory and Examples, 4th Edition}, Cambridge University Press 



\end{thebibliography}
\end{document}